\title{\LARGE \bf
Global observability analysis of a growth model for insects farming
}
\author{Rania Tafat$^1$, Jaime A. Moreno $^2$, and Stefan Streif$^{1,3}$ 
\thanks{*This work was partially supported by UNAM-PAPIIT IN106323.}
\thanks{$^{1}$ R. Tafat and S. Streif are with the Technische Universit\"at Chemnitz, 09126 Chemnitz, Germany, Automatic Control and System Dynamics Lab; E-mail:
        {\tt\small \{rania.tafat, stefan.streif\}@etit.tu-chemnitz.de}}%
\thanks{$^{2}$ J. A. Moreno is with the Eléctrica y Computación
Instituto de Ingeniería-UNAM
Universidad Nacional Autónoma de México; E-mail:
        {\tt\small JMorenoP@iingen.unam.mx}}
\thanks{$^{3}$ S. Streif is also with the Fraunhofer Institute for Molecular Biology and Applied Ecology, Department of Bioresources, Giessen, Germany.}%
}
\newcommand{\ra}{\rightarrow}
\newtheorem{dfn}{Definition}
\newtheorem{asm}{Assumption}
\newtheorem{thm}{Theorem}
\newtheorem{rem}{Remark}
\newcommand{\subalign}[1]{%
	\vcenter{%
		\Let@ \restore@math@cr \default@tag
		\baselineskip\fontdimen10 \scriptfont\tw@
		\advance\baselineskip\fontdimen12 \scriptfont\tw@
		\lineskip\thr@@\fontdimen8 \scriptfont\thr@@
		\lineskiplimit\lineskip
		\ialign{\hfil$\m@th\scriptstyle##$&$\m@th\scriptstyle{}##$\crcr
			#1\crcr
		}%
	}
}
\begin{document}

\maketitle 
\thispagestyle{empty}
\pagestyle{empty}
\thispagestyle{FirstPage}
\begin{abstract}
The \emph{Hermetia illucens} insects or the black soldier fly has been attracting a growing interest in the food and feed industry. For its high nutritional value on the one hand, and because it is an adequate species for insects in controlled environmental agriculture systems, on the other.  
Therefore, several models describing this larvae's behaviour have been developed in the literature. 
Due to the complex nature of living organisms, systems of controlled environment agriculture are characterised by their strong nonlinearities.
In this paper, we present a three dimensional nonlinear model describing the black soldier fly dry biomass weight dynamic changes due to the temperature's influence. 
In practice, this biomass weight is not measured in real time. 
This becomes problematic for applying feedback control strategies that assume full information of the states.
Thus, this work investigates the observability of the dry biomass of a \emph{Hermetia illucens} farming batch. 
The instantaneous and global observability of the aforementioned model is proven by constructing an injective transformation between the state space and a higher dimensional space where the transformed states are observable.
\end{abstract}


\section{Introduction}
{\emph{Hermetia illucens}, commonly known as black soldier fly, has been receiving increasing attention because of its usability for recycling organic waste and producing animal feed \cite{newton1977dried}.} 
It contains an important amount of protein \cite{barragan2017nutritional}, grows relatively fast \cite{nguyen2013influence} and reduces organic waste because it {grows on it} \cite{nguyen2015ability}.
This type of insect is currently used in livestock nutrition in aquaculture  \cite{belghit2019black}, \cite{bondari1981soldier}, and animal feed in general \cite{barragan2017nutritional}.
In addition, it is a promising candidate for human food because of its nutritional values \cite{wang2017review}. It has also been considered for human food supplements and production of bio plastic.

Insects, being {complex living organisms}, are affected by various factors during their growth process. 
Feed quality and rate, different temperatures of the environment, air quality, humidity etc. all have a significant impact on the insects' development  \cite{gligorescu2019development},
\cite{padmanabha2020larvaemodel},
\cite{tomberlin2009development} and \cite{varotto2017survey}. 
An effort in the {research community} has been done to obtain a better understanding of these different influences on the black soldier fly, resulting in many models \cite{chia2018threshold}, \cite{gligorescu2019development}, \cite{padmanabha2020larvaemodel}, \cite{PADMANABHA2023}, \cite{yuwatida2019}.
To our knowledge, the most extensive dynamic model of \emph{Hermetia illucens} is presented in \cite{padmanabha2020larvaemodel}, where the authors endeavour to include the majority of the relevant environmental factors impacting the insects growth.
The result is a highly complex dynamic discontinuous nonlinear system with 19 states and 12 inputs. 
The resulting model fits experimental data from various literature sources and explains the impact of each individual external factor on the development of the larvae during its life cycle.

These models are used to investigate the optimization of the industrial production process of the black soldier fly insect.
Model predictive control (MPC) was employed to identify the optimal growth trajectories of the insects during this industrial process.
This is particularly highlighted in \cite{PADMANABHA2023} where the authors show that the automation goals of the larvae production set by the optimal controller significantly reduce the production costs.
Similarly, in \cite{KHATIB2022} the authors use MPC to control a black soldier fly food unit with a lettuce one.
Another simulation of the same insects egg production shows the important cost reduction of the process while performing a higher production amount with MPC \cite{kobelski2022process}.

One common and important assumption in all the previous cited works regarding control, is the availability of full information on the biomass.  
However, in practice this is rarely possible.
Measuring the biomass weight, number or size is a difficult task and can only be done with low frequency, because it alters the whole growth and cultivation process. 
Such an invasive measuring process can only be conducted in relatively large time intervals, which is not enough for the application of optimal control schemes. 
An alternative to obtaining the information on the biomass without affecting its growth and with the adequate frequency, is the use of an observer or software sensor. The objective of this paper is to perform an observability analysis of the growth model, in order to decide if it is possible to design such an observer and the expected properties it can have.

Observability analysis, in its basic understanding, attempts to answer the following question "is it possible to \emph{uniquely} determine the states of a system given its outputs and inputs?"
A natural mathematical approach for answering this, is searching for the existence of an injective mapping between the states of the system and its outputs and inputs. 
Indeed, if such an injective map exists, this implies, by definition, that it is possible to uniquely determine the states from the available measurements.
For linear time invariant systems, this has been standardized in \cite{kalman1969topics}, where the transformation is unique and well known as the observability matrix.
This is a constant matrix that needs to be full column rank in order to be injective. 
Hence, the observability analysis procedure of linear systems can be summed up to checking the rank of the observability matrix.
For nonlinear systems however, such a standardization is very difficult to reach. 
The closest equivalent of the observability matrix in the general nonlinear case was presented in \cite{hermann1977nonlinear}. 
The candidate mapping is state and input dependent, it has high dimension and it is composed of the system's successive Lie derivatives of the output.
Although a general condition for its global injectivity is difficult to establish, Hermann et al. propose a general condition on the local injectivity of the transformation.
Indeed, if for some dimension, the Jacobian matrix of the proposed map is full column rank, then it is locally injective and consequently, the system is locally observable.
Since then, this method has been the standard way of performing observability analysis for nonlinear systems \cite{HellmanObsAnalysis}, \cite{martinelli2005observability, zhao2016observability, rausch2013nonlinear}.
Nevertheless, this approach has some limitations because it does not provide the size of the neighborhood for observability. 
And thus it is hard to determine what \emph{local} means when a nonlinear system is said to be locally observable. 
This could both mean an epsilon subset or the entire state space.
Moreover, strongly nonlinear systems can be locally observable for some neighborhood but globally unobservable. 
For these reasons, a global observability analysis is preferred. 
The analysis then becomes very attached to the model since, to the knowledge of the authors, there is no standard method. 
The approach consists of analysing the model, the outputs relations to the states of interests and to construct an injective map between the outputs/inputs and the states. 
This global observability analysis has been carried out on sensorless induction motors \cite{ibarra2004global}, uncertain reaction systems \cite{moreno2008global} and a class of biological reactors \cite{SCHAUM2007213}.
In this work, we aim at extending these examples by performing such an analysis on a simplified three dimensional temperature Hermetia Illucens model. The global observability of the system will be proven by constructing the immersion using differential observability tools.

The paper is organized as follows. Section \ref{Sec:Model} presents the \emph{Hermetia illucens} growth process modelling based on the temperatures of the environment. Section \ref{Sec:Observability} tackles the observability analysis of the previously mentioned model. The last Section \ref{Sec:Conclusion} concludes
the results.

\section{\emph{Hermetia illucens} Temperature model}
\label{Sec:Model} 
This section introduces a model describing the effect of the temperature on the growth process of the \emph{Hermetia illucens} insects in an indoor farming context.
This model will later be used to perfom an observability analysis.

The influences of interest on the larvae growth process are:
\begin{itemize}
\item $T_{\mathrm{med}}$, the temperature of the medium or substrate where the larvae grow, 
\item $T_{\mathrm{air}}$, the temperature of the air in the chamber where the experiment is set up,  and 
\item $T_{\mathrm{out}}$, the temperature of the outside environment of the chamber. 
\end{itemize}
The first two temperatures correspond to states of the model.  
Since they can be measured, they are also considered as system's outputs.
The outside temperature $T_{\mathrm{out}}$ is considered as a disturbance, since it cannot be influenced and it comes from the exterior of the system's boundaries. 
However, it is not necessarily an unwanted input that increases the system's error. 
And since it is a measured signal, it can be referred as a measured input signal or an external input signal \cite{van2010optimal}.

All the other environmental or external factors not previously mentioned but present in \cite{padmanabha2020larvaemodel} such as humidity, oxygen concentration of the air, moisture of the medium etc. are assumed to be held at a constant value and, therefore, they appear in the dynamical model as parameters. 

The life cycle of the black soldier fly is composed of four stages: egg, larval, pupae and adult stage \cite{li2011bioconversion}. 
The stage of interest in this work is the larval stage where the insects feed on organic material and grow rapidly. 
The metabolic process that describes the development of the larva during this stage is composed of assimilation, maintenance and growth. 
After which, the larvae starts maturing and its biomass declining.
However, in practice, the insects are harvested when their biomass weight $B_{\mathrm{dry}}$ reaches its peak, which only occurs at the end of the growth phase and before maturity.
Since the knowledge of the biomass weight is only needed before harvesting, the maturity phase is not of interest in this work and will not be taken into account.
Switching from a stage of the metabolic process to another is not only dependent on time but also on other growing conditions. 
In order to know precisely when this switching happens, the development sum was introduced in \cite{padmanabha2020larvaemodel}. 
In this work, the development sum is not considered. Instead and without loss of generality, the model considered that the switching between stages happens when the biomass reaches its peak. Which is a reasonable a priori information.

Chambers where larvae grows are equipped with ventilators and heaters/coolers. 
These are modelled as {(measured)} input control variables: $u_{\mathrm{v}}$ for the ventilator pump and $u_{\mathrm{T}}$ for the heater/cooler.

In the following, a temperature depending growth model of the \emph{Hermetia illucens} insect is introduced, which is a simplified version of the extensive model presented in  \cite{PADMANABHA2023}.

{Finally, the model of the closed production environment has the following state variables $x$, control inputs $u$, measured perturbation $d$ and measured outputs $y$}
\begin{equation}
    \begin{split}
        &x=\begin{bmatrix} B_{\mathrm{dry}} &  T_{\mathrm{med}} & T_{\mathrm{air}} \end{bmatrix}^\top,\\  &u=\begin{bmatrix} u_{\mathrm{v}} & u_{\mathrm{T}} \end{bmatrix}^{\top},\\  &d= T_{\mathrm{out}}, \\
        &y = \begin{bmatrix}
             T_{\mathrm{med}} & T_{\mathrm{air}} 
        \end{bmatrix}^\top.
        \label{eq:states_inputs_dis_outputs}
\end{split}
\end{equation} 
{$B_{\mathrm{dry}}$ is the dry biomass weight per larva, which is the variable to be observed.} The detailed dynamic description of each state is presented in the next subsections.
\subsection{Dry biomass weight $B_{\mathrm{dry}}$}
Dry weight of a larva changes due to the influences of the medium where the insects grow and the feed flux that is contained in the substrate.
This feed is assimilated and then converted into energy.
A part of this energy is for maintaing the larvae's structure and another part is for its growth and maturity.
This is mathematically described with a mass balance equation in \cite{padmanabha2020larvaemodel} using fluxes.
Omitting the maturity stage from the equation, the dynamic changes of the dry biomass weight per larva is described with
\begin{equation}
\begin{split}
    \dfrac{ \mathrm{d} B_{\mathrm{dry}}}{ \mathrm{d}t} =& \phi_{\mathrm{B_{eff}}} - \phi_{\mathrm{B_{maint}}} \\
    =& r_{\text {assim}}\left( B_{\text{dry}},T_{\text {med}} \right) k_{\text {inges}} B_{\text {dry}} \\ &-r_{\text {maint}}\left( T_{\text {med}} \right) k_{\text {maint}} B_{\text {dry}},
\end{split}
\end{equation}
where $r_{\text{assim}} \left( B_{\text{dry}}, T_{\text {med}} \right)$ and $r_{\text{maint}} \left( T_{\text {med}} \right)$ are the assimilation and maintenance rate functions, respectively. 
These rate functions depend on the {temperature and are} modelled as follows
\begin{equation}
\begin{split}
 &r_{\text{assim}} \left( B_{\text{dry}}, T_{\text {med}} \right) = \left( 1 - \frac{B_{\text{dry}}}{k_{\text {Basy }}} \right)\frac{r_{\text{T}} \left( T_{\text {med}} \right)}{k_{r_{\max }T},}\\
 & r_{\text{maint}} \left( T_{\text {med}} \right) = \frac{r_{\text{T}}\left( T_{\text {med}} \right)}{k_{r_{\max } T}},\\ 
        &r_{\mathrm{T}} \left( T_{\text {med}} \right) =k_{\mathrm{r}_{\text {max }} \mathrm{T}} \left[ 1+k_\gamma \exp \left(-k_{\rho \mathrm{T}}\left(T_{\text {med}}-k_{T_{\text {base}}}\right)\right) \right. \\ 
       &\left. +\exp \left(-\frac{k_{T_{\text {max}}}-T_{\text {med }}}{k_{\Delta T}}\right) \right]^{-1}.
\end{split}
\label{eq:rate_functions}
\end{equation}
$r_T\left( T_{\text {med}} \right)$ is the temperature {rate that describes} the growth of the insect with respect to the temperature $T_{\text {med}}$.
A modified Logan-10 model is used \cite{logan1976analytic}.
\begin{rem}
    Time dependencies are omitted to simplify the notations.
\end{rem}

\subsection{Temperature in the growing medium $T_{\mathrm{med}}$}
Temperature in the growing medium is caused by the metabolic activity of larvae and microbiome, conductive heat transfer to the walls and convective heat transfer from growing medium to air. 
This is mathematically described as
\begin{equation}
k_{\mathrm{C}_{\text {med }}} \frac{\mathrm{d} T_{\text {med }}}{\mathrm{d} t}=\phi_{\mathrm{Q}_{\text {bio }}}-\phi_{\mathrm{Q}_{\mathrm{m}-\mathrm{a}}}-\phi_{\mathrm{Q}_{\mathrm{m}-\mathrm{c}}},
\end{equation}
where the the fluxes are expressed by
\begin{equation}
    \begin{split}
        &\phi_{\mathrm{Q}_{\text {bio }}}=L_{\text {num }}\left(k_{\mathrm{Q}_{\text {assim }}} \phi_{\mathrm{B}_{\text {assim }}}+k_{\mathrm{Q}_{\text {maint }}} \phi_{\mathrm{B}_{\text {maint }}}+k_{\mathrm{Q}_{\text {mat }}} \phi_{\mathrm{B}_{\text {mat }}}\right),\\
        &\phi_{\mathrm{Q}_{\mathrm{a}-\mathrm{m}}}=k_{\mathrm{A}_{\mathrm{m}}} k_{\mathrm{h}_{\mathrm{a}-\mathrm{m}}}\left(T_{\text {med }}-T_{\mathrm{air}}\right),\\
        &\phi_{\mathrm{Q}_{\mathrm{m}-c}}=k_{\mathrm{A}_{\mathrm{m}-c}} k_{\mathrm{U}_{\mathrm{m}-c}}\left(T_{\mathrm{med}}-T_{\mathrm{chm}}\right),\\ 
    \end{split}
\end{equation}
and the heat capacities presented in \cite{PADMANABHA2023} are considered {constant, within a good approximation}
\begin{equation}
\begin{split}
    k_{\mathrm{h}_{\mathrm{a}-\mathrm{m}}} &=k_{\mathrm{he}_{\mathrm{a}-\mathrm{m}}} + k_{\mathrm{hm_{a-m}}}\phi_{\mathrm{W_{evap}}} \approx k_{\mathrm{he}_{\mathrm{a}-\mathrm{m}}},\\
     k_{\mathrm{C_{med}}} &= k_{\mathrm{c}_{\mathrm{chm}}} k_{\mathrm{m}_{\mathrm{chm}}}+k_{\mathrm{c}_{\text {water }}} W_{\text {cham }} \approx k_{\mathrm{m}_{\mathrm{chm}}}.
\end{split}  
\end{equation}
$T_{\mathrm{chm}}$, the temperature of the larvae's chamber, is assumed linearly dependent on the temperatures of the air and of the outside
\begin{equation}
    T_{\mathrm{chm}} \approx \frac{2}{3}T_{\mathrm{air}} + \frac{1}{3} T_{\mathrm{out}}. 
\end{equation}
This simplification reduces the number of states to be considered without heavily impacting the accuracy of the model.

\subsection{Temperature of the air $T_{\mathrm{air}}$}
The air temperature fluctuates due to various factors, including convective heat exchange between the air and the growing medium, heat fluxes generated by actuators such as heat coolers and ventilator pumps, convective flux between the air and the walls of the production environment, and heat losses resulting from leakage.
Heat losses associated with human interactions, such as opening and closing doors, are negligible and can be disregarded.
This is described as follows
\begin{equation}
k_{\mathrm{C}_{\mathrm{air}}} \frac{\mathrm{d} T_{\mathrm{air}}}{\mathrm{d} t}=\phi_{\mathrm{Q}_{\mathrm{hx}-\mathrm{a}}}+\phi_{\mathrm{Q}_{\mathrm{exch}}}+\phi_{\mathrm{Q}_{\mathrm{leak}}}+\phi_{\mathrm{Q}_{\mathrm{m}-\mathrm{a}}}\\+\phi_{\mathrm{Q}_{\mathrm{a}-\mathrm{c}}} 
\end{equation}
where fluxes are given by
\begin{equation}
\begin{split}
    \phi_{\mathrm{Q}_{\mathrm{a}-\mathrm{c}}} &= k_{\mathrm{A}_{\mathrm{c}}} k_{\mathrm{h}_{\mathrm{a}-c}}\left(T_{\mathrm{chm}}-T_{\mathrm{air}}\right),\\
    \phi_{Q_{a-m}} &= k_{\mathrm{A}_m} k_{\mathrm{h}_{a-m}}\left(T_{\text {med }}-T_{\text {air }}\right),\\
    \phi_{\mathrm{Q}_{\mathrm{a}-\mathrm{hx}}} &= k_{\mathrm{A}_{\mathrm{hx}}} k_{\mathrm{h}_{\mathrm{a}-\mathrm{hx}}}\left(T_{\mathrm{hx}}-T_{\mathrm{air}}\right),\\   \phi_{\mathrm{Qexch}} &= k_{\mathrm{c}_{\mathrm{air}}} k_{\rho_{\mathrm{air}}} \overbrace{k_{\dot{\mathrm{V}}_{\mathrm{u}}} u_{\mathrm{V}}}^{\phi_{\dot{\mathrm{v}}_{\mathrm{u}}}}\left(T_{\text {out }}-T_{\mathrm{air}}\right),\\
    \phi_{\mathrm{Q}_{\text {leak }}} &= k_{\mathrm{C}_{\text {air }}} k_{\rho_{\text {air }}} k_{\dot{\mathrm{v}}_{\text {leak }}}\left(T_{\text {out }}-T_{\text {air }}\right).   
\end{split}
\end{equation}
Some assumptions on the temperature of the air change are undertaken to simplify further.

First, total heat capacity of the air is constant
\begin{equation}
\begin{split}
    k_{\mathrm{C}_{\text {air }}}&=k_{\mathrm{c}_{\mathrm{air}}} k_{\rho_{\mathrm{air}}} k_{\mathrm{v}_{\mathrm{chm}}}+k_{\mathrm{C}_{\mathrm{vap}}} k_{\mathrm{v}_{\mathrm{chm}}} H_{\mathrm{air}} \\&\approx k_{\mathrm{c}_{\mathrm{air}}} k_{\rho_{\mathrm{air}}} k_{\mathrm{v}_{\mathrm{chm}}}.     
\end{split}
\end{equation}
Second, the temperature of the heat exchanger is considered linear with respect to the input signal to heater
\begin{equation}
    T_{\mathrm{hx}} \approx k_{\mathrm{hx}} u_{\mathrm{T}}.
\end{equation}

\subsection{{\emph{Hermetia illucens} temperature model: synthesis}}
The simplified three dimensional  state space model, using the states, inputs and outputs presented in \eqref{eq:states_inputs_dis_outputs}, is thus given by
\begin{equation}
\begin{split}
\dot{x}_{1} =&\left(k_{1}\left(1-\frac{1}{k_{2}}x_{1}\right)-k_{3}\right)r_{T}(x_2) x_{1},\\
\dot{x}_{2}  =&- k_{4}x_{2}+k_{5}x_{3}+k_{6}\left(1-\frac{1}{k_{2}}x_{1}\right)r_{T}(x_2)x_{1}\\ &+ k_{7}r_{T}(x_2)x_{1}- k_{8}d,\\
\dot{x}_{3} =&k_{9}x_{2}- k_{10}x_{3}+k_{11}u_{2}-k_{12}x_{3}u_{1}+k_{13}d\\ &+k_{12}du_{1},
\end{split}
\label{eq:state_space}
\end{equation}
and the outputs are
\begin{equation}
\begin{split}
y_{1} & =x_{2},\\
y_{2} & =x_{3}.
\end{split}
\end{equation}
All parameters are assumed to be known, and all inputs and outputs can be measured. 
For a detailed description of each parameter and their indices the reader is referred to \cite{PADMANABHA2023}.

\begin{table}[h!]
\vspace{6pt}
    \centering
    \begin{tabular}{ ||c| c|| }
    \hline
    Symbol & Expression \\
    \hline
    \hline 
    & \\
        $k_1$ & $k_{\mathrm{inges}}$\\
        $k_2$ & $k_{\mathrm{B_{asy}}}$ \\
        $k_3$ & $k_{\mathrm{maint}}$ \\
       $k_4$ & $\frac{1}{k_{\mathrm{C_{med}}}} \left( k_{\mathrm{A_m}}k_{\mathrm{h_{a-m}}} + k_{\mathrm{A_{m-c}}} k_{\mathrm{U_{m-c}}} \right) $ \\
        $k_5$ & $\frac{1}{k_{\mathrm{C_{med}}}} \left( k_{\mathrm{A_m}}k_{\mathrm{h_{a-m}}} + \frac{2}{3}k_{\mathrm{A_{m-c}}} k_{\mathrm{U_{m-c}}} \right)  $ \\
        $k_6$ & $\frac{1}{k_{\mathrm{C_{med}}}} L_{\mathrm{num}} k_{\mathrm{Q_{assim}}}k_{\mathrm{\alpha_{assim}}} k_{\mathrm{inges}}$ \\
        $k_{7}$ & $\frac{1}{k_{\mathrm{C_{med}}}} L_{\mathrm{num}} k_{\mathrm{Q_{maint}}}  k_{\mathrm{maint}}$ \\
        $k_{8}$ & $\frac{1}{k_{\mathrm{C_{med}}}} k_{\mathrm{A_{m-c}}} k_{\mathrm{U_{m-c}}}$ \\
         $k_9$ & $\frac{1}{k_{\mathrm{C_{air}}}} k_{\mathrm{A_{m}}} k_{\mathrm{h_{a-m}}}$ \\
          $k_{10}$ & $\frac{1}{k_{\mathrm{C_{air}}}} ( k_{\mathrm{A_{hx}}} k_{\mathrm{ha-hx}} + k_{\mathrm{C_{air}}} k_{\mathrm{\rho_{air}}} k_{\dot{\mathrm{V}}_{\mathrm{u}}} + \frac{1}{3}k_{\mathrm{A_c}}k_{\mathrm{h_{a-c}}}$ \\
           & $+ k_{\mathrm{A_m}} k_{\mathrm{h_{a-m}}} )$ \\
          $k_{11}$ & $\frac{1}{k_{\mathrm{C_{air}}}} k_{\mathrm{A_{hx}}} k_{\mathrm{ha-hx}} k_{\mathrm{hx}}$ \\
          $k_{12}$ & $k_{\mathrm{\rho_{air}}} k _{\dot{\mathrm{V}}_{\mathrm{u}}}$ \\
          $k_{13}$ & $ k_{\mathrm{\rho_{air}}} k _{\dot{\mathrm{V}}_{\mathrm{u}}} + \frac{1}{k_{\mathrm{C_{air}}}} k_{\mathrm{A_c}}k_{\mathrm{h_{ac}}}$\\
          & \\
          \hline
    \end{tabular}
    \caption{List of parameters expression. A more detailed list of all parameters, their descriptions and values can be found in \cite{PADMANABHA2023}.}
    \label{tab:parameters}
\end{table}
The obtained simplified \emph{Hermetia illucens} three dimensional temperature model was compared to the experimental results from the setup presented in \cite{PADMANABHA2023}. 
The weight of the dry biomass per larva, temperatures of air and medium were simulated using the same parameters, inputs and disturbance trajectories.
Other environmental factors, such as humidity, oxygen concentration, and substrate moisture, are implicitly accounted for in the parameters and are set to their optimal values based on the findings of \cite{PADMANABHA2020MPC}.
Before reaching maturity, the dry biomass weight per larva resulting from this simplified model follows the initial trajectory of the more complex model from which it is derived. This results in a smoother trajectory, albeit within the same range. It reaches a peak value and then ceases to evolve further.

Hence, the obtained model is only valid during the larvae growth process before the biomass weight's reaches its peak and the insects development process switches to maturity.
This is not problematic since the context of the modelling is in a farming process where the larvae is going to be harvested when its size is at its maximum.
The simulated air and medium temperature trajectories are also similar to the experimental data as they reach the same steady state.
The transitional phase of the temperatures, however, is different between simulation and experimental data. As numerous environmental factors are assumed to remain constant, several nonlinear behaviors are absent in the simplified model. Consequently, the trajectories are smoothed, and the simulations fail to capture the full spectrum of real temperature fluctuations.
But since this phase only lasts a few hours while the biomass evolves in a slower time scale that lasts weeks, it does not significantly impact the model.
Thus, the three dimensional \emph{Hermetia illucens} temperature model is an adequate candidate for analysis.

The observability of the model written in (\ref{eq:state_space}) is analysed in the next section.


\section{Observability analysis}
\label{Sec:Observability}
In this section, the observability analysis of the model presented in \eqref{eq:state_space} is addressed. 
The temperatures of the medium and air can be measured, therefore $x_2$ and $x_3$ are known. 
Similarly, the temperature of the outside is also given. It is considered as a disturbance because in practice this temperature cannot be influenced or controlled.
Since in practice biomass weight is often the most difficult state to measure in real time, we are interested in the possibility of estimating the dry biomass weight of the larvae, i.e. $x_1$. 
For this we consider the system in \eqref{eq:state_space} and study its observability. Since the states $x_2$ and $x_3$ are measured, the task is reduced to determine if $x_1$ can be calculated from the measurements.
The following definitions are needed to explain the observability concepts.
\begin{dfn}[Analytic function \cite{ComplexAnalysis}]
A function $h$ is said to be an analytic function on an open set $\mathcal{X}$ if for any point $x_0 \in \mathcal{X}$:
\begin{itemize}
\item $h$ is infinitely differentiable on $\mathcal{X}$,
\item $h$ is locally given by a convergent power series.
\end{itemize}
\end{dfn}

\begin{dfn}[Instantaneous observability \cite{BERNARD2022}]
 A system is instantaneously observable if for all  $t_d > 0$ and for $u_0 \in \mathcal{U}$, any pair of solutions     $x_a(t)$                   and $x_b(t)$
initialized in  $ \mathcal{X} $  and defined on $\mathbb{R}_{+}$               such that 
    \begin{equation*}
    h(x_{a}(t),u_0) = h(x_{b}(t), u_0), \; \; \forall 0 \leq t < t_{d}
\end{equation*} 

verifies  $x_a(\cdot) = x_b(\cdot)$.
\end{dfn}

\begin{dfn}[Differential observability \cite{Bernard2019}]
 A system is differentially observable if there exists $n_{y} \in \mathbb{N}$ such that the mapping $H : \mathcal{X} \ra \mathbb{R} ^ {n_{y}}$,
 \begin{equation}
\begin{aligned}
&H(x)=\left(\begin{array}{c}
y \\
\dot{y} \\
\vdots \\
\frac{d^{(n_y-1)}y}{dt^{(n_y-1)}}
\end{array}\right)
\end{aligned}
\label{eq:map_H}
\end{equation}
is injective and full rank on $\mathcal{X}$. 
\end{dfn}

\begin{asm}
   All output functions considered in this paper are analytic functions.
\end{asm}
Based on this assumption, observability is analyzed in the sense of the above definitions.

The textbook observability analysis method for nonlinear systems was introduced in \cite{hermann1977nonlinear}, and is commonly known as checking the rank condition. 
This approach permits to check the local differential observability of the system  by investigating the local injectivity of the map $H$, that is presented in (\ref{eq:map_H}).
This map is obtained by taking successive Lie derivatives of the system's output function, and it depends on the inputs and states. 
If its Jacobian matrix is full column rank, then it is locally injective and the local observability is assured.
However, we are interested in obtaining a global result of differential observability.

We study the \emph{differential observability} of model \eqref{eq:state_space}. This classical concept (see e.g. \cite{Ber19}, \cite{BERNARD2022}, \cite{GauKup01}) requires, for our problem at hand, that for some $m\geq 1$ there exists an \emph{injective} mapping $\Omega: \mathbb{R}_{\geq0} \rightarrow \mathbb{R}^{m}$ between the state $x_{1} \in \mathbb{R}_{\geq0}$ and the measured signals $\left( y,u,d \right)$ and a finite number of their derivatives, i.e. 
\[
\Omega(x_1) = \psi\left(y,\cdots,y^{(m-1)},u,\cdots,u^{(m-1)},d,\cdots,d^{(m-1)} \right)
\]
for a known function $\psi(\cdot) \in \mathbb{R}^{m}$. Injectivity of $\Omega$ means that if at any time instant we know $\left( y,u,d \right)$ and their derivatives then the value of $x_1$ is uniquely determined.

The main result of the section is the next theorem that establishes conditions for the global differential observability of the model \eqref{eq:state_space} depending only on the values of the parameters.

\begin{thm} 
Assume that all parameters of the model \eqref{eq:state_space} are positive. If the following condition 
\begin{equation}
  k_{7}  \geq \left( 1 - 2\frac{k_{3}}{k_{1}} \right) k_{6} 
  \label{eq:cond_injectivity}
\end{equation}
is satisfied then the system is globally and differentially observable for all $x_1 \in \mathbb{R}$.
\end{thm}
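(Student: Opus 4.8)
\noindent\emph{Proof strategy.} The plan is to build the immersion $\Omega$ explicitly from the first two derivatives of the measured medium temperature $y_1=x_2$, and then to decide its global injectivity through an elementary comparison of two parabolas in $x_1$. First I would note that $x_2=y_1$ and $x_3=y_2$ are measured, so $r_T(x_2)=r_T(y_1)$ is a known signal; moreover it is strictly positive, since in \eqref{eq:rate_functions} the bracketed term equals $1$ plus two positive exponentials whenever all parameters are positive, whence $0<r_T(y_1)<k_{r_{\max}T}$. Solving the $\dot{x}_2$-equation of \eqref{eq:state_space} for its only unmeasured part gives
\[
\mu \;:=\; \frac{\dot{y}_1+k_4y_1-k_5y_2+k_8 d}{r_T(y_1)} \;=\; \Bigl(k_6+k_7-\tfrac{k_6}{k_2}x_1\Bigr)x_1 \;=:\; \phi(x_1),
\]
so $\mu$ is a known function of $\bigl(y_1,\dot{y}_1,y_2,d\bigr)$ equal to the quadratic $\phi(x_1)$. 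Differentiating once more and substituting $\dot{x}_1=r_T(x_2)\,\rho(x_1)$ with $\rho(x_1):=\bigl(k_1-k_3-\tfrac{k_1}{k_2}x_1\bigr)x_1$, read off the $\dot{x}_1$-equation, yields a second known signal $\nu:=\dot{\mu}/r_T(y_1)=\phi'(x_1)\,\rho(x_1)$, a function of $\bigl(y_1,\dot{y}_1,\ddot{y}_1,y_2,\dot{y}_2,d,\dot{d}\bigr)$. I would then take $m=3$ and $\Omega(x_1)=\bigl(\phi(x_1),\,\phi'(x_1)\rho(x_1),\,0\bigr)=(\mu,\nu,0)$, which is of the required form $\Omega(x_1)=\psi\bigl(y,\dot{y},\ddot{y}_1,d,\dot{d}\bigr)$; differential observability of \eqref{eq:state_space} then amounts to injectivity on $\mathbb{R}$ of the planar map $x_1\mapsto\bigl(\phi(x_1),\phi'(x_1)\rho(x_1)\bigr)$.

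The core step is to show that \eqref{eq:cond_injectivity} is exactly the condition making this planar map injective. Suppose $\phi(a)=\phi(b)$ and $\phi'(a)\rho(a)=\phi'(b)\rho(b)$ with $a\neq b$. From $\phi(a)=\phi(b)$ and $a\neq b$ the two values are symmetric about the vertex of $\phi$, that is $a+b=S$ with $S:=k_2(k_6+k_7)/k_6$, and hence $\phi'(b)=-\phi'(a)$. Then the second equality forces $\phi'(a)\bigl(\rho(a)+\rho(b)\bigr)=0$; the sub-case $\phi'(a)=0$ gives $a=b=S/2$, a contradiction, so $\rho(a)+\rho(b)=0$. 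Writing $P:=ab$ and $a^2+b^2=S^2-2P$, this identity is linear in $P$ and forces $P=P_0:=\tfrac{S}{2}\bigl(S-\tfrac{k_2(k_1-k_3)}{k_1}\bigr)$; distinct real $a,b$ with sum $S$ and product $P_0$ exist if and only if $S^2-4P_0>0$, and a short manipulation gives
\[
S^2-4P_0 \;=\; S\Bigl(\tfrac{2k_2(k_1-k_3)}{k_1}-S\Bigr)\;>\;0 \;\iff\; k_7 < \Bigl(1-2\tfrac{k_3}{k_1}\Bigr)k_6 .
\]
Hence, under \eqref{eq:cond_injectivity} no such pair $(a,b)$ exists, $\Omega$ is injective on all of $\mathbb{R}$, and the model \eqref{eq:state_space} is globally differentially observable for every $x_1\in\mathbb{R}$; the computation also shows that \eqref{eq:cond_injectivity} is in fact necessary. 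If one additionally insists on $\Omega$ being everywhere of full rank, one checks that $\Omega'(x_1)=0$ forces $\phi'(x_1)=0$ and $\rho(x_1)=0$ simultaneously, which can occur only at $x_1=S/2$ and only when \eqref{eq:cond_injectivity} holds with equality; for strict inequality $\Omega$ is an immersion everywhere.

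The reduction to $\mu$ and $\nu$ and the discriminant identity are routine; the step that needs care is the logic of the second paragraph — that $\phi(a)=\phi(b)$ with $a\neq b$ genuinely pins down $a+b$, that the branch $\phi'(a)=0$ cannot coexist with $a\neq b$, and, above all, that $S^2-4P_0>0$ collapses exactly to the negation of \eqref{eq:cond_injectivity} rather than to some nearby parameter inequality. I expect this last algebraic coincidence to be the main, though elementary, obstacle, since it is the point at which all the model parameters must align.
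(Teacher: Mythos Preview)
Your proposal is correct and follows essentially the same route as the paper: both build the map $\Omega=(\Omega_1,\Omega_2)$ from the first and second derivatives of $y_1$ (your $\phi$ and $\phi'\rho$ coincide with the paper's $\Omega_1$ and $\Omega_2$), and both reduce injectivity to deciding when a pair $a\neq b$ with $\Omega_1(a)=\Omega_1(b)$ can also satisfy $\Omega_2(a)=\Omega_2(b)$. The only stylistic difference is that the paper plugs the explicit quadratic-formula roots $x_1^{(1,2)}=\tfrac{k_2}{2k_6}\bigl((k_6+k_7)\pm\sqrt{\Delta}\bigr)$ into $\Omega_2$ and simplifies, whereas you exploit the vertex symmetry $a+b=S$, $\phi'(b)=-\phi'(a)$ and symmetric functions to reach the same discriminant condition; your version is a touch cleaner and works directly on all of $\mathbb{R}$ without the paper's intermediate restriction to $x_1\geq 0$. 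Your additional remark on the full-rank (immersion) case at equality in \eqref{eq:cond_injectivity} goes slightly beyond what the paper proves.
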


\begin{proof}
The procedure to find the map $\Omega$ is to take successive derivatives of the output variables $y$. 
Then, candidate maps $\Omega$ are constructed from these derivatives. 
Each candidate will be checked for injectivity until an adequate injective map is constructed.

From the dynamics \eqref{eq:state_space} it is immediate that the output $y_2$ and its first derivative are not related to $x_1$, and so we only consider derivatives of the output $y_1$.

The first step is to analyse the first derivative of $y_1=x_2$, which is readily obtained from \eqref{eq:state_space} and it is given by
\begin{equation}
    \begin{split}
\dot{y}_{1}=&-k_{4}y_{1}+k_{5}y_{2}+\left[k_{6}\left(1-\frac{1}{k_{2}}x_{1}\right)+k_{7}\right]r_{T}\left(y_{1}\right)x_{1}\\& - k_{8}d.    
    \end{split}
    \label{eq:dot_y}
\end{equation}
Taking into consideration that the temperature rate function \eqref{eq:rate_functions} is by definition strictly positive, i.e. $r_{\mathrm{T}}(y_1) > 0$ for all $y_1$, we shift the measurements and disturbance on the right hand side and the $x_1$ depending function on the left hand side.
Therefore, this expression can be rewritten as
\begin{equation}
\Omega_{1}\left(x_1\right) = \frac{\dot{y}_{1}+k_{4}y_{1}-k_{5}y_{2} + k_{8}d }{r_{T}\left(y_{1}\right)},
\label{eq:Obs_phi}
\end{equation}
where
\begin{equation}
\Omega_{1}\left(x_{1}\right) \triangleq \left[k_{6}\left(1-\frac{1}{k_{2}}x_{1}\right)+k_{7}\right]x_{1}.\label{eq:Def_phi}
\end{equation}
Note that the right hand side of equation \eqref{eq:Obs_phi} depends only on measured variables and their derivatives, and it can therefore be calculated from the measurements. 

Injectivity of function $\Omega_1$ would imply instantaneous observability. However, the graph of $\Omega_1$ is a parabola with two real roots, at $x_1=0$ and at $x_1 = k_2\left(1+\frac{k_{7}}{k_{6}} \right)$, and it attains its maximum at the point (see Fig. \ref{fig:phi_x1})
\[
x_{1}^{*}=\frac{k_{6}+k_{7}}{2k_{6}}k_{2}.  
\]
It is clear that, if the domain of $x_1$ is not restricted to the intervals $0\leq x_1 \leq x_{1}^{*}$ or $x_1 \geq x_{1}^{*}$, then $\Omega_1$ is not injective. Checking this condition with the parameters reported in \cite{PADMANABHA2023}, it is apparent that $x_1$ is not confined in any of the intervals.
\begin{figure}
\vspace{6pt}
 \centering
    \includegraphics[width=0.75\linewidth]{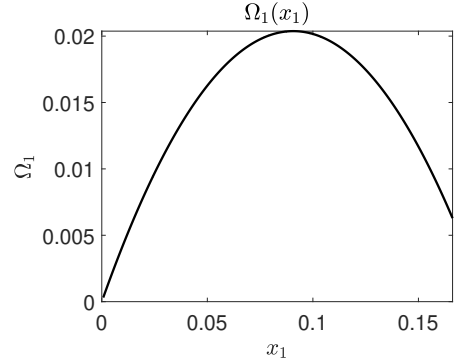}
    \caption{Evolution of function $\Omega_1$ with respect to $x_1$}
    \label{fig:phi_x1}
\end{figure}
\begin{rem}
\label{rem:inject} We note, for future use, that variable $x_1$ is restricted to be positive, i.e. $x_1 \geq 0$. For $x_1$ in the interval $0 \leq x_1 \leq k_2\left(1+\frac{k_{7}}{k_{6}} \right)$ there is no injectivity and the value of $\Omega_1$ satisfies 
\[
0 \leq \Omega_1(x_1) \leq \Omega_{1}(x_1^*) = \frac{(k_{6} + 3k_{7})(k_{6} + k_{7})}{4k_{6}}k_2 .
\] 
However, when $x_1 > k_2\left(1+\frac{k_{7}}{k_{6}} \right)$ there is injectivity, since only one root of the parabola $\Omega_1$ is positive.
\end{rem}

The next step is to investigate the second derivative of $y_1$, from wich we will construct another map.

Differentiating \eqref{eq:dot_y} we obtain
\begin{equation}
\begin{split}
    \Ddot{y}_1 =& - k_4 \dot{y}_1 + k_5 \dot{y}_2 + \Omega_{1}^{'}(x_1) \dot{x}_1 r_T(y_1) \\ &+ \Omega_{1}(x_1)\dot{r}_T(y_1) + k_8 \dot{d}.   
\end{split}
\end{equation}
Use of \eqref{eq:Obs_phi} and \eqref{eq:Def_phi} leads to
\begin{equation}
\begin{split}
&\Ddot{y}_1 = - k_4 \dot{y}_1 + k_5 \dot{y}_2\\
     &+ \left[ \left(k_{6}+k_{7}-2\frac{k_{6}}{k_{2}}x_{1} \right)\left(k_{1}  -k_{3} -\frac{k_{1}}{k_{2}}x_{1} \right)x_{1}\right] r^2_{T}(y_1)\\
     & + \frac{\dot{y}_{1}+k_{4}y_{1}-k_{5}y_{2}-k_{8}d}{r_{T}(y_1)} \dot{r}_T(y_1,\dot{y_1}) + k_{8} \dot{d}.       
\end{split}
\end{equation}
Similarly to the construction of $\Omega_1(x_1)$, we shift the measurements and disturbance on the right hand side while keeping the $x_1$ depending function on the left hand side.
This expression can be rewritten as
\begin{equation}
\begin{split}
    \Omega_{2}\left( x_1 \right) &=  \frac{\Ddot{y}_1 + k_4 \dot{y}_1 -  k_5 \dot{y}_2 + k_8 \dot{d}}{r^2_{T}(y_1)} \\
      &  - \frac{\dot{y}_{1}+k_{4}y_{1}-k_{5}y_{2}+k_{8}d}{r_{T}^{3}(y_1)}\dot{r}_T(y_1,\dot{y_1}),  
\end{split}
\label{eq:T2}
\end{equation}
where
\begin{equation}
  \Omega_{2}\left( x_1 \right) \triangleq \left(k_{6}+k_{7}-2\frac{k_{6}}{k_{2}}x_{1} \right)\left(k_{1}  -k_{3} -\frac{k_{1}}{k_{2}}x_{1} \right)x_{1}.
  \label{eq:xi_x}
\end{equation}

Note that the right hand side of \eqref{eq:T2} depends only on measured variables and their derivatives, so that it can be calculated from the measurements.

$\Omega_{2}\left( x_1 \right)$ in \eqref{eq:xi_x} is a cubic function, with three real roots, so that it is clearly not injective (see Fig. \ref{fig:xi_x1}), and $\Omega_2$ alone cannot assure instantaneous observability. 
\begin{figure}
 \centering
 \includegraphics[width=0.75\linewidth]{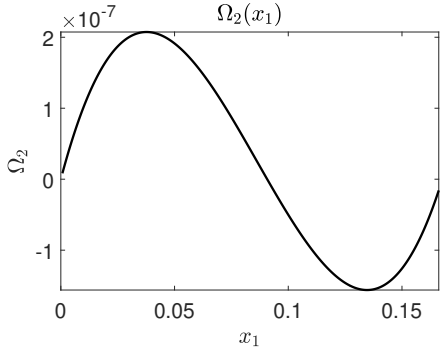}
    \caption{Evolution of function $\Omega_2$ with respect to $x_1$}
    \label{fig:xi_x1}
\end{figure}

Finally, we study the injectivity of the function 
\begin{equation}
\Omega:x_{1}\in \mathcal{X}\subset\mathbb{R}_{\geq 0}\rightarrow\left(\begin{array}{c}
\Omega_1 \left(x_{1}\right)\\
\Omega_2 \left(x_{1}\right)
\end{array}\right)\in\mathbb{R}^{2}.    
\end{equation}
The injectivity of $\Omega$ corresponds to the instantaneous observability of the dynamical model. Recall that $\Omega$ is injective iff 
\begin{equation}
   \left(\begin{array}{c}
\Omega_1 \left(x_{1}^{(1)}\right)\\
\Omega_2 \left(x_{1}^{(1)}\right)
\end{array}\right)=\left(\begin{array}{c}
\Omega_1 \left(x_{1}^{(2)}\right)\\
\Omega_2 \left(x_{1}^{(2)}\right)
\end{array}\right)\Rightarrow x_{1}^{(1)}=x_{1}^{(2)}.
\end{equation}
According to Remark \ref{rem:inject} when $x_1 > k_2 (1+\frac{k_{7}}{k_{6}})$ function $\Omega_1$ is injective (in the domain $x_1 \geq 0$), and thus we can restrict the analysis of injectivity of $\Omega$ to the interval $0\leq x_1 \leq k_2 (1+\frac{k_{7}}{k_{6}})$. 

To find when the mapping $\Omega$ is injective we determine by contraposition when it is not injective. For this we find when $x_{1}^{(1)} \neq x_{1}^{(2)} \Rightarrow \Omega(x_{1}^{(1)}) = \Omega(x_{1}^{(2)})$. We just need to consider a pair $x_{1}^{(1)} \neq x_{1}^{(2)}$ such that $\Omega_{1}(x_{1}^{(1)}) = \Omega_{1}(x_{1}^{(2)})$ and then determine when  $\Omega_{2}(x_{1}^{(1)}) = \Omega_{2}(x_{1}^{(2)})$.

Values $x_{1}^{(1)} \neq x_{1}^{(2)}$ for which $\Omega_{1}(x_{1}^{(1)}) = \Omega_{1}(x_{1}^{(2)}) = \nu$ are easily found, since $\Omega_{1}$ is a quadratic function, and they are given by
\[
x_{1}^{\left(1,2\right)}= \frac{k_{2}}{2k_{6}}\left(\left(k_{6}+k_{7}\right)\pm\sqrt{\Delta}\right),
\]
where $\Delta \triangleq \left(k_{6}+k_{7}\right)^{2}-4\frac{k_{6}}{k_{2}}\nu$. Note that $x_{1}^{(1)} \neq x_{1}^{(2)}$ and are real numbers only if $\nu < \frac{(k_{6} + k_{7} )^{2}k_2}{4 k_{6}}$, i.e. $\Delta >0$.  

Now we find out when $\Omega_{2}(x_{1}^{(1)}) = \Omega_{2}(x_{1}^{(2)})$. Using \eqref{eq:xi_x} and after some algebraic simplifications we conclude that this is the case if
\begin{multline}
  -\left(k_{1}  -k_{3} -\frac{k_{1}}{2k_{6}}\left(k_{67} + \sqrt{\Delta}\right) \right)\left(k_{67} + \sqrt{\Delta}\right) = \\
  \left(k_{1}  -k_{3} - \frac{k_{1}}{2k_{6}}\left(k_{67} - \sqrt{\Delta}\right) \right) \left(k_{67} - \sqrt{\Delta}\right),        
\end{multline}
where we have used that $\Delta>0$ and $k_{67} = k_{6} + k_{7}$. This expression can be easily further simplified to
\[
\frac{k_{2}}{2k_{6}}\left(k_{7}+\frac{k_{6}}{k_{1}}k_{3}\right)\left(k_{6}+k_{7}\right)= \nu .
\]
Since $\nu < \frac{(k_{6} + k_{7} )^{2}k_2}{4 k_{6}}$ the latter equation is feasible if
\[
\frac{k_{2}}{2k_{6}}\left(k_{7}+\frac{k_{6}}{k_{1}}k_{3}\right)\left(k_{6}+k_{7}\right) < \frac{(k_{6} + k_{7} )^{2}k_2}{4 k_{6}} ,
\]
i.e.
\[
  k_{7}  < \left( 1 - 2\frac{k_{3}}{k_{1}} \right) k_{6} .
\]
This is the condition for $\Omega$ to be not injective. Thus 
\[
  k_{7}  \geq \left( 1 - 2\frac{k_{3}}{k_{1}} \right) k_{6} 
\]
implies injectivity.

$\Omega$ is injective as long as \eqref{eq:cond_injectivity} is satisfied. 
Since we have previously established that the biomass can be instantaneously observed when $\Omega$ is injective on $\mathcal{X}$, then the theorem is proved. 
\end{proof}

\begin{rem}
It is also possible to verify the injectivity of $\Omega$ is by plotting the graph of the function $\Omega$ on the
plane $\left(\begin{array}{c}
\Omega_1 \left(x_{1}\right)\\
\Omega_2 \left(x_{1}\right)
\end{array}\right)$ for all values of $x_{1}\in\left[ 0, k_2 (1+\frac{k_{7}}{k_{6}}) \right]$. 
This graph is a curve in $\mathbb{R}^{2}$ (see Fig. \ref{fig:Omega}). The curve intersects itself if and only if the function is not injective. 
\begin{figure}
 \centering
    \includegraphics[width=0.75\linewidth]{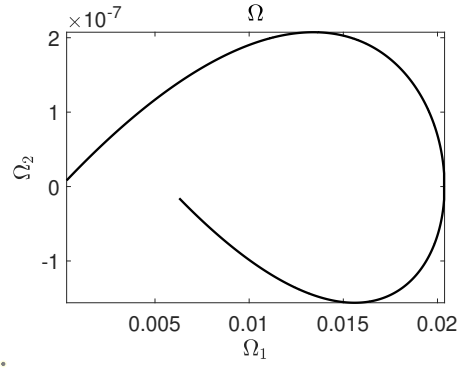}
    \caption{Behaviour of the curve $\Omega(x_1)$}
    \label{fig:Omega}
\end{figure}
This approach is very simple and appealing. However, the issue with such an approach is that it relies on the available system's parameters. 
This means that the result is only valid for one unique scenario. 
If the parameters change, then this result cannot be exploited anymore.
For the general case, the analytical method is preferred.  
\end{rem}

Since the existence of a global injective map between the state space and inputs/outputs space was proven to exist, then the three dimensional model of the \emph{Hermetia illucens} is globally instantaneously observable.
This result gives a theoretical guarantee of the possibility of estimating $x_1$, the dry biomass weight per larva, in any arbitrary interval of time. 

Moreover, since the observability map requires two derivatives of the output to achieve injectivity, then a classical observer, being a copy of the system and a static output injection term is not appropriate. An immersion technique will be more akin to the observability properties of the system. The design of an appropriate observer is under study and it will be reported elsewhere.

\section{Conclusion}
\label{Sec:Conclusion}
A simplification of an extensive dynamical model of the growth process of the \emph{Hermetia illucens} insects focusing on the dry biomass per larva and the different temperatures that impacts its behaviour was presented. 
Some assumptions were undertaken to obtain the resulted model but the same parameters and experience scenarios found in the literature were kept. 
This was done in order to analyze the global observability of the biomass of such a model.
As expected, the observability analysis of the \emph{Hermetia illucens} model depends on the dry biomass weight only. 
Indeed, in practice the temperatures can be measured in real time contrary to the biomass weight.
A differential observability approach was exploited to derive a general observability condition on the parameters of the model. 
Under this condition, the proof of the global instantaneous observability of the dry biomass was presented.
This gives the guarantee of the convergence of both asymptotic and nonasymptotic (or finite time) observers based on the previous model. 
Future work will focus on the design of such an observer.

\addtolength{\textheight}{-3cm}   





\phantomsection
\addcontentsline{toc}{chapter}{Bibliography} 
\bibliographystyle{acm}
\bibliography{bib}

\begin{thebibliography}{10}

\bibitem{ComplexAnalysis}
{\sc Ahlfors, L.~V.}
\newblock {\em Complex analysis: {A}n introduction of the theory of analytic
  functions of one complex variable}, second~ed.
\newblock McGraw-Hill Book Co., New York-Toronto-London, 1966.

\bibitem{barragan2017nutritional}
{\sc Barragan-Fonseca, K.~B., Dicke, M., and van Loon, J.~J.}
\newblock Nutritional value of the black soldier fly (hermetia illucens l.) and
  its suitability as animal feed--a review.
\newblock {\em Journal of Insects as Food and Feed 3}, 2 (2017), 105--120.

\bibitem{belghit2019black}
{\sc Belghit, I., Liland, N.~S., Gjesdal, P., Biancarosa, I., Menchetti, E.,
  Li, Y., Waagb{\o}, R., Krogdahl, {\AA}., and Lock, E.-J.}
\newblock Black soldier fly larvae meal can replace fish meal in diets of
  sea-water phase atlantic salmon (salmo salar).
\newblock {\em Aquaculture 503\/} (2019), 609--619.

\bibitem{Bernard2019}
{\sc Bernard, P.}
\newblock {\em Introduction}.
\newblock Springer International Publishing, Cham, 2019, pp.~49--52.

\bibitem{Ber19}
{\sc Bernard, P.}
\newblock {\em Observer Design for Nonlinear Systems}.
\newblock Springer International Publishing, 2019.

\bibitem{BERNARD2022}
{\sc Bernard, P., Andrieu, V., and Astolfi, D.}
\newblock Observer design for continuous-time dynamical systems.
\newblock {\em Annual Reviews in Control 53\/} (2022), 224--248.

\bibitem{bondari1981soldier}
{\sc Bondari, K., and Sheppard, D.}
\newblock Soldier fly larvae as feed in commercial fish production.
\newblock {\em Aquaculture 24\/} (1981), 103--109.

\bibitem{chia2018threshold}
{\sc Chia, S.~Y., Tanga, C.~M., Khamis, F.~M., Mohamed, S.~A., Salifu, D.,
  Sevgan, S., Fiaboe, K.~K., Niassy, S., van Loon, J.~J., Dicke, M., et~al.}
\newblock Threshold temperatures and thermal requirements of black soldier fly
  hermetia illucens: Implications for mass production.
\newblock {\em PloS one 13}, 11 (2018), e0206097.

\bibitem{GauKup01}
{\sc Gauthier, J.~P., and Kupka, I.}
\newblock {\em Deterministic {O}bservation theory and applications}.
\newblock Cambridge University Press, 2001.

\bibitem{gligorescu2019development}
{\sc Gligorescu, A., Toft, S., Hauggaard-Nielsen, H., Axelsen, J.~A., and
  Nielsen, S.~A.}
\newblock Development, growth and metabolic rate of hermetia illucens larvae.
\newblock {\em Journal of Applied Entomology 143}, 8 (2019), 875--881.

\bibitem{HellmanObsAnalysis}
{\sc Hellmann, S., Hempel, A.-J., Streif, S., and Weinrich, S.}
\newblock Observability and identifiability analyses of process models for
  agricultural anaerobic digestion plants.
\newblock In {\em 2023 24th International Conference on Process Control (PC)\/}
  (2023), pp.~84--89.

\bibitem{hermann1977nonlinear}
{\sc Hermann, R., and Krener, A.}
\newblock Nonlinear controllability and observability.
\newblock {\em IEEE Transactions on automatic control 22}, 5 (1977), 728--740.

\bibitem{ibarra2004global}
{\sc Ibarra-Rojas, S., Moreno, J., and Espinosa-P{\'e}rez, G.}
\newblock Global observability analysis of sensorless induction motors.
\newblock {\em Automatica 40}, 6 (2004), 1079--1085.

\bibitem{kalman1969topics}
{\sc Kalman, R.~E., Falb, P.~L., and Arbib, M.~A.}
\newblock {\em Topics in mathematical system theory}, vol.~33.
\newblock McGraw-Hill New York, 1969.

\bibitem{KHATIB2022}
{\sc Khatib, M.~A., Hempel, A.-J., Padmanabha, M., and Streif, S.}
\newblock Centralized optimization of resource routing in interconnected food
  production units with harvesting events*.
\newblock {\em IFAC-PapersOnLine 55}, 7 (2022), 322--327.
\newblock 13th IFAC Symposium on Dynamics and Control of Process Systems,
  including Biosystems DYCOPS 2022.

\bibitem{kobelski2022process}
{\sc Kobelski, A., Hempel, A.-J., Padmanabha, M., Wille, L.-C., and Streif, S.}
\newblock Process optimization of black soldier fly egg production via model
  based control.
\newblock {\em arXiv preprint arXiv:2212.05776\/} (2022).

\bibitem{li2011bioconversion}
{\sc Li, Q., Zheng, L., Qiu, N., Cai, H., Tomberlin, J.~K., and Yu, Z.}
\newblock Bioconversion of dairy manure by black soldier fly (diptera:
  Stratiomyidae) for biodiesel and sugar production.
\newblock {\em Waste management 31}, 6 (2011), 1316--1320.

\bibitem{logan1976analytic}
{\sc Logan, J., Wollkind, D., Hoyt, S., and Tanigoshi, L.}
\newblock An analytic model for description of temperature dependent rate
  phenomena in arthropods.
\newblock {\em Environmental Entomology 5}, 6 (1976), 1133--1140.

\bibitem{martinelli2005observability}
{\sc Martinelli, A., and Siegwart, R.}
\newblock Observability analysis for mobile robot localization.
\newblock In {\em 2005 IEEE/RSJ International Conference on Intelligent Robots
  and Systems\/} (2005), IEEE, pp.~1471--1476.

\bibitem{moreno2008global}
{\sc Moreno, J.~A., and Dochain, D.}
\newblock Global observability and detectability analysis of uncertain reaction
  systems and observer design.
\newblock {\em International Journal of Control 81}, 7 (2008), 1062--1070.

\bibitem{newton1977dried}
{\sc Newton, G., Booram, C., Barker, R., and Hale, O.}
\newblock Dried hermetia illucens larvae meal as a supplement for swine.
\newblock {\em Journal of Animal Science 44}, 3 (1977), 395--400.

\bibitem{nguyen2013influence}
{\sc Nguyen, T.~T., Tomberlin, J.~K., and Vanlaerhoven, S.}
\newblock Influence of resources on hermetia illucens (diptera: Stratiomyidae)
  larval development.
\newblock {\em Journal of Medical Entomology 50}, 4 (2013), 898--906.

\bibitem{nguyen2015ability}
{\sc Nguyen, T.~T., Tomberlin, J.~K., and Vanlaerhoven, S.}
\newblock Ability of black soldier fly (diptera: Stratiomyidae) larvae to
  recycle food waste.
\newblock {\em Environmental entomology 44}, 2 (2015), 406--410.

\bibitem{PADMANABHA2020MPC}
{\sc Padmanabha, M., Beckenbach, L., and Streif, S.}
\newblock Model predictive control of a food production unit: A case study for
  lettuce production.
\newblock {\em IFAC-PapersOnLine 53}, 2 (2020), 15771--15776.
\newblock 21st IFAC World Congress.

\bibitem{padmanabha2020larvaemodel}
{\sc Padmanabha, M., Kobelski, A., Hempel, A.-J., and Streif, S.}
\newblock A comprehensive dynamic growth and development model of hermetia
  illucens larvae.
\newblock {\em Plos one 15}, 9 (2020), e0239084.

\bibitem{PADMANABHA2023}
{\sc Padmanabha, M., Kobelski, A., Hempel, A.-J., and Streif, S.}
\newblock Modelling and optimal control of growth, energy, and resource
  dynamics of hermetia illucens in mass production environment.
\newblock {\em Computers and Electronics in Agriculture 206\/} (2023), 107649.

\bibitem{rausch2013nonlinear}
{\sc Rausch, M., Streif, S., Pankiewitz, C., and Findeisen, R.}
\newblock Nonlinear observability and identifiability of single cells in
  battery packs.
\newblock In {\em 2013 IEEE International Conference on Control Applications
  (CCA)\/} (2013), IEEE, pp.~401--406.

\bibitem{SCHAUM2007213}
{\sc Schaum, A., and Moreno, J.~A.}
\newblock Dynamical analysis of global observability properties for a class of
  biological reactors.
\newblock {\em IFAC Proceedings Volumes 40}, 4 (2007), 213--218.
\newblock 10th IFAC Symposium on Computer Applications in Biotechnology.

\bibitem{yuwatida2019}
{\sc Sripontan, Y., Chiu, C.-I., Tanansathaporn, S., Leasen, K., and Manlong,
  K.}
\newblock {Modeling the Growth of Black Soldier Fly Hermetia illucens (Diptera:
  Stratiomyidae): An Approach to Evaluate Diet Quality}.
\newblock {\em Journal of Economic Entomology 113}, 2 (12 2019), 742--751.

\bibitem{tomberlin2009development}
{\sc Tomberlin, J.~K., Adler, P.~H., and Myers, H.~M.}
\newblock Development of the black soldier fly (diptera: Stratiomyidae) in
  relation to temperature.
\newblock {\em Environmental entomology 38}, 3 (2009), 930--934.

\bibitem{van2010optimal}
{\sc Van~Straten, G., van Willigenburg, G., van Henten, E., and van Ooteghem,
  R.}
\newblock {\em Optimal control of greenhouse cultivation}.
\newblock CRC press, 2010.

\bibitem{varotto2017survey}
{\sc Varotto~Boccazzi, I., Ottoboni, M., Martin, E., Comandatore, F., Vallone,
  L., Spranghers, T., Eeckhout, M., Mereghetti, V., Pinotti, L., and Epis, S.}
\newblock A survey of the mycobiota associated with larvae of the black soldier
  fly (hermetia illucens) reared for feed production.
\newblock {\em PloS one 12}, 8 (2017), e0182533.

\bibitem{wang2017review}
{\sc Wang, Y.-S., and Shelomi, M.}
\newblock Review of black soldier fly (hermetia illucens) as animal feed and
  human food.
\newblock {\em Foods 6}, 10 (2017), 91.

\bibitem{zhao2016observability}
{\sc Zhao, S., Duncan, S.~R., and Howey, D.~A.}
\newblock Observability analysis and state estimation of lithium-ion batteries
  in the presence of sensor biases.
\newblock {\em IEEE Transactions on Control Systems Technology 25}, 1 (2016),
  326--333.

\end{thebibliography}

\end{document}